\newcommand{\ms}{\mathit{MS}}
\newcommand{\propSq}{\mathit{P}_{\mathrm{sqr}}}
\newcommand{\propPal}{\mathit{P}_{\mathrm{pal}}}
\newcommand{\propRn}{\mathit{P}_{\mathrm{per}}}
\newcommand{\propLyn}{\mathit{P}_{\mathrm{Lyn}}}
\newcommand{\propNSq}{\mathit{P}_{\mathrm{sqf}}}
\newcommand{\ST}{\mathit{ST}}
\newcommand{\pathstr}{\mathit{str}}
\begin{document}
\title{On Longest Common Property Preserved Substring Queries}
\author{Kazuki Kai\inst{1} \and
Yuto Nakashima\inst{1}\and
Shunsuke Inenaga\inst{1}\and
Hideo~Bannai\inst{1}\orcidID{0000-0002-6856-5185}\and
Masayuki~Takeda\inst{1}
\and Tomasz~Kociumaka\inst{2,3}\orcidID{0000-0002-2477-1702}
}
\authorrunning{K. Kai et al.}
\institute{
Department of Informatics, Kyushu University, Japan\\
\email{kai.kazuki.640@s.kyushu-u.ac.jp}\\
\email{\{yuto.nakashima,inenaga,bannnai,takeda\}@inf.kyushu-u.ac.jp}
\and
Department of Computer Science, Bar-Ilan University, Israel\\
\and
Institute of Informatics, University of Warsaw, Poland\\
\email{kociumaka@mimuw.edu.pl}
}
\maketitle              \begin{abstract}
We revisit the problem of longest common property preserving substring queries introduced by~Ayad et al. (SPIRE 2018, arXiv 2018).
We consider a generalized and unified on-line setting,
where we are given a set $X$ of $k$ strings of total length $n$ that
can be pre-processed so that, given a query string
$y$ and a positive integer $k'\leq k$, we can determine the longest substring of $y$ that satisfies some specific property
and is common to at least $k'$ strings in $X$.
Ayad et al. considered the longest square-free substring in an on-line setting and the longest periodic and palindromic substring in an off-line setting.
In this paper, we give efficient solutions in the on-line setting
for finding the longest common square, periodic, palindromic,
and Lyndon substrings.
More precisely, we show that $X$ can be pre-processed in $O(n)$ time
resulting in a data structure of $O(n)$ size that answers
queries in $O(|y|\log\sigma)$ time and $O(1)$ working space,
where $\sigma$ is the size of the alphabet, and
the common substring must be a 
square, a periodic substring, a palindrome, or a Lyndon word.

\keywords{squares \and periodic substrings \and palindromes \and Lyndon words}
\end{abstract}
\section{Introduction}
The longest common substring of two strings $x$ and $y$ is a longest string that is a substring of both $x$ and $y$.
It is well known that the problem can be solved in linear time,
using the generalized suffix tree of $x$ and $y$~\cite{DBLP:conf/focs/Weiner73,gusfield1997algorithms}.

Ayad et al.~\cite{DBLP:conf/spire/AyadBGIPPR18,DBLP:journals/corr/abs-1810-02099} proposed a class of problems called {\em longest common property preserved substring}, where
the aim is to find the longest substring that has some property and is common to a subset of the input strings.
They considered several problems in two different settings.

In the first {\em on-line} setting,
one is given a string $x$ that can be pre-processed,
and the problem is to answer, for any given query string $y$,
the longest square-free substring that is common to both $x$ and $y$.
Their solution takes $O(|x|)$ time for preprocessing and $O(|y|\log\sigma)$ 
time for queries, where $\sigma$ is the alphabet size.

In the second {\em off-line} setting, one is given a set of $k$ strings of total length $n$ and a positive integer $k' \leq k$, and the problem is to find the longest periodic substring, as well as the longest palindromic substring,
that is common to at least $k'$ of the strings.
Their solution works in $O(n)$ time and space.
However, it does not (at least directly) give a solution 
for the on-line setting.

In this paper, we consider a generalized and unified on-line setting, where we are given a set $X$ of
$k$ strings with total length $n$ that can be pre-processed, and the problem
is to answer, for any given query string $y$ and positive integer $k' \leq k$, the longest property preserved
substring that is common to $y$ and at least $k'$ of the strings.
We give solutions to the following properties in this setting, 
all working in $O(n)$ time and space preprocessing,
and $O(|y|\log\sigma)$ time and $O(1)$ working space for answering queries: squares, periodic substrings, palindromes, and Lyndon words.
Furthermore, we note that solutions for the off-line setting 
can be obtained by using our solutions for the on-line setting.
We also note that our algorithms can be modified to remove the $\log\sigma$ factor in the off-line setting.

As related work, the off-line version of property preserved subsequences 
have been considered for some properties.
The longest common square subsequence between two strings
can be computed in $O(n^6)$ time~\cite{DBLP:conf/cpm/InoueIHBT18}.
The longest common palindromic subsequence between two strings
of length $n$ can be computed in $O(n^4)$ time~\cite{BAE201829}.
 \section{Preliminaries}
\subsection{Strings}
Let $\Sigma$ be a set of symbols, or alphabet, 
and $\Sigma^*$ the set of strings over $\Sigma$.
We assume a constant or linearly-sortable integer alphabet\footnote{Note that a string
on a general ordered alphabet can be transformed into a string on an integer alphabet in $O(n\log\sigma)$ time.}
and use $\sigma$ to denote the size of the alphabet,
i.e. $|\Sigma| = \sigma$.
For any string $x\in\Sigma^*$, let $|x|$ denote its length.
For any integer $1 \leq i \leq |x|$, $x[i]$ is the $i$th symbol of $x$,
and for any integers $1 \leq i\leq j \leq |x|$, $x[i..j] = x[i]\cdots x[j]$.
For convenience, $x[i..j]$ is the empty string when $i > j$.
If a string $w$ satisfies $w = xyz$, then, strings $x$,$y$,$z$ are respectively called
a prefix, substring, and suffix of $w$.
A prefix (resp. substring, suffix) is called a {\em proper} prefix (resp. substring, suffix)
if it is shorter than the string.

Let $x^R$ denote the reverse of $x$, i.e., $x^R = x[|x|]\cdots x[1]$.
A string $x$ is said to be a {\em palindrome} if $x = x^R$.
A string $y$ is a {\em square} if $y = xx$ for some string $x$,
called the {\em root} of $y$. A string $y$ is {\em primitive} if there does not exist any $x$ such that
$y = x^k$ for some integer $k\geq 2$.
A square is called a {\em primitively rooted} square,
if its root is primitive.
An integer $p$, $1\le p \le |x|$, is called a {\em period} of a string $x$, if
$x[i] = x[i+p]$ for all $1 \leq i \leq |x|-p$.
A string $x$ is {\em periodic}, if the smallest period $p$ of $x$ is at most $|x|/2$.
A {\em run} in a string is a maximal periodic substring, i.e.,
a periodic substring $x[i..j]$ with smallest period $p$ is a run,
if the period of any string $x[i'..j']$ with $i'\leq i\leq j \leq j'$ and either $i'\neq i$ or $j'\neq j$, is not $p$.
The following is the well known (weak) periodicity lemma concerning periods:
\begin{lemma}[(Weak) Periodicity Lemma~\cite{10.2307/2034009}]\label{lem:weakperiodicity}
If $p$ and $q$ are two periods of a string $w$, and $p+q \leq|w|$, then, $\gcd(p,q)$ is also a period of $w$.
\end{lemma}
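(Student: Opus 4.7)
The plan is to prove the weak periodicity lemma by strong induction on $p+q$, mimicking the Euclidean algorithm. Without loss of generality assume $p \ge q$. In the base case $p = q$, we have $\gcd(p,q) = p$, which is a period of $w$ by hypothesis, so there is nothing to prove.

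For the inductive step, suppose $p > q$ and $p+q \le |w|$. The central claim is that $p - q$ is also a period of $w$. Granting this, I would observe that $(p-q) + q = p \le p + q - q < p+q$, so the inductive hypothesis applies to the periods $p-q$ and $q$ (whose sum is at most $|w|$), yielding that $\gcd(p-q,q)=\gcd(p,q)$ is a period of $w$, as desired.

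To establish that $p-q$ is a period, fix an index $i$ with $1 \le i \le |w|-(p-q)$; I must show $w[i] = w[i+p-q]$. I would split into two cases depending on whether it is feasible to first ``shift forward'' by $p$ or to first ``shift backward'' by $q$. If $i + p \le |w|$, then period $p$ gives $w[i] = w[i+p]$, and since $(i+p-q)+q = i+p \le |w|$ with $i+p-q \ge 1$, period $q$ gives $w[i+p-q] = w[i+p]$, so $w[i]=w[i+p-q]$. Otherwise $i > |w|-p$, and the hypothesis $p+q \le |w|$ forces $i > q$, so $i-q \ge 1$; period $q$ yields $w[i-q]=w[i]$, and since $(i-q)+p = i+p-q \le |w|$, period $p$ yields $w[i-q]=w[i-q+p]=w[i+p-q]$. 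Combining gives $w[i]=w[i+p-q]$ in both cases.

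The only subtle point, and thus the main thing to get right, is the bookkeeping of indices in the case split: one must carefully use the assumption $p+q \le |w|$ to guarantee that in the ``backward first'' case $i - q \ge 1$, and that in the ``forward first'' case the intermediate index $i+p$ stays within $[1,|w|]$. Once this case analysis is carried out, the induction concludes the argument.
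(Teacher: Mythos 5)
Your proof is correct. Note that the paper itself does not prove this lemma: it is quoted as a known result (the weak Fine--Wilf periodicity lemma) with a citation to the original source, so there is no in-paper argument to compare against. Your induction following the subtractive Euclidean algorithm, with the two-case index chase showing that $p-q$ is a period (shifting forward by $p$ then back by $q$ when $i+p\le|w|$, and back by $q$ then forward by $p$ otherwise), is the standard proof, and your bookkeeping --- in particular using $p+q\le|w|$ to get $i>q$ in the second case, and $p\le|w|$ to apply the inductive hypothesis to the pair $(p-q,q)$ --- is accurate.
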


Let $\prec$ denote a total order on $\Sigma$, as well as the lexicographic order on $\Sigma^*$ it induces.
That is, for any two strings $x,y$,
$x \prec y$ if and only if either
$x$ is a prefix of $y$, or
there exist strings $w,x',y' \in \Sigma^*$ such that
$x = wx'$ and $y = wy'$, and $x'[1] \prec y'[1]$.
A string is a {\em Lyndon word}~\cite{lyndon1954burnside} if it is lexicographically
smaller than any of its non-empty proper suffixes.

\subsection{Suffix Trees}
\label{subsec:st}
The suffix tree~\cite{DBLP:conf/focs/Weiner73} $\ST(x)$ of a string $x$ is a compacted trie of the set of non-empty
suffixes of $x\$$,
where $\$$ denotes a unique symbol that does not occur in $x$.
More precisely, it is 
1) a rooted tree where edges are labeled by non-empty strings,
2) the concatenation of root to leaf paths correspond to 
all and only suffixes of $x\$$,  
3) any non-leaf node has at least two children, and 
the first letter of the label on the edges to its children are distinct.

A node in $\ST(x)$ is called an {\em explicit} node,
while a position on the edges corresponding to proper prefixes 
of the edge label are called {\em implicit} nodes.
For a (possibly implicit) node $v$ in $\ST(x)$, let $\pathstr(v)$ denote the string obtained by concatenating the edge labels on the path from the root to $v$.
The {\em locus} of a string $p$ in $\ST(x)$ is a (possibly implicit) node $v$
in $\ST(x)$ such that $\pathstr(v) = p$.
Each explicit node $v$ of the suffix tree can be augmented with a 
{\em suffix link}, that points to the node $u$, such that
$\pathstr(v) = \pathstr(v)[1]\pathstr(u)$. 
It is easy to see that because $v$ is an explicit node,
$u$ is also always an explicit node.

It is well known that
the suffix tree (and suffix links) can be represented in $O(|x|)$ space and
constructed in $O(|x|\log \sigma)$ time~\cite{DBLP:journals/algorithmica/Ukkonen95} for general ordered alphabets,
or in $O(|x|)$ time for constant~\cite{DBLP:conf/focs/Weiner73} or linearly-sortable integer alphabets~\cite{DBLP:conf/focs/Farach97}.
The suffix tree can also be defined for a set of strings $X = \{ x_1, \ldots, x_k\}$, and again can be constructed in 
$O(n\log\sigma)$ time for general ordered alphabets or in $O(n)$
time for constant or linearly-sortable integer alphabets where $n$ is 
the total length of the strings.
This is done by considering and building the suffix tree for
the string $x_1\$\cdots x_k\$$ and pruning edges below any $\$$.
It is also easy to process $\ST(X)$ to compute for each explicit node $v$,
the length $|\pathstr(v)|$, as well as an occurrence $(s,b)$ of $\pathstr(v)$ in $X$,
where $x_s[b..b+|\pathstr(v)|-1] = \pathstr(v)$.
Also, these values can be computed in constant time for any implicit node,
given the values for the closest descendant explicit node.

We will later use the following lemma to efficiently add explicit nodes corresponding to the loci of a set of substrings that are of interest.
\begin{lemma}[{\cite[Corollary 7.3]{DBLP:journals/corr/abs-1107-2422}}]\label{lemma:offlineSubwordLoci}
  Given a collection of substrings $s_1, \ldots, s_k$ of a string
  $w$ of length $n$, each represented by an occurrence in $w$, in 
  $O(n + k)$ total time we can compute the locus of each substring $s_i$ 
  in the suffix tree of $w$.
  Moreover, these loci can be made explicit in $O(n + k)$ extra time.
\end{lemma}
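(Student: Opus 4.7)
The plan is to reduce the task to a batch of offline weighted level ancestor queries on $\ST(w)$. After constructing $\ST(w)$ in $O(n)$ time and annotating every explicit node with its string depth and a canonical occurrence in a single post-order sweep, I observe that each substring $s_i = w[b_i..e_i]$ is a prefix of the suffix $w[b_i..n]$. Hence its locus in $\ST(w)$ lies on the root-to-leaf path of the leaf $\lambda_i$ corresponding to that suffix, at string depth $d_i := e_i - b_i + 1$. Finding the locus of $s_i$ therefore reduces to finding the shallowest explicit ancestor $v_i$ of $\lambda_i$ with $|\pathstr(v_i)| \geq d_i$: the locus is $v_i$ itself when $|\pathstr(v_i)| = d_i$, and otherwise lies on the edge entering $v_i$ at offset $d_i - |\pathstr(\mathrm{parent}(v_i))|$.

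The main obstacle is resolving all $k$ weighted level ancestor queries within the total $O(n+k)$ budget, since a textbook heavy-path binary search would already incur an $O(\log n)$ overhead per query. The approach is to exploit the monotonicity of string depths along any root-to-leaf path together with the fact that query depths are integers in $[1,n]$, which allows a global bucket-sort of the queries (grouped by leaf and by target depth) in $O(n+k)$ time. A single DFS of $\ST(w)$ then processes the queries assigned to each leaf while amortizing the cost of locating the right depth across edges that are shared by multiple queries, supported by a predecessor structure on the current root-to-node path that is compatible with this amortization; this is the linear-time batched weighted level ancestor primitive packaged by the cited corollary.

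For the final step, once every $s_i$ has been assigned a locus represented either as an explicit node or as a pair (edge, offset), I would materialize the implicit loci as follows. Collect all (edge, offset) requests, distribute them into per-edge buckets, and sort the offsets globally with a bucket sort, since they are integers bounded by $n$; the whole sort costs $O(n+k)$. For each edge, walk along its sorted offsets in one pass, splicing fresh explicit nodes into the edge and propagating the edge's outgoing child pointer together with string-depth and occurrence information to the newly inserted nodes. The work spent on each edge is proportional to the number of new nodes it receives plus one, so the total cost of the subdivisions is $O(n+k)$, matching the stated extra budget.
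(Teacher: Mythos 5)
The paper itself offers no proof of this lemma---it is imported verbatim as Corollary~7.3 of the cited reference---so your attempt has to stand on its own. Your reduction is sound as far as it goes: the locus of $s_i=w[b_i..e_i]$ lies on the root-to-leaf path of the leaf $\lambda_i$ for the suffix $w[b_i..n]$ at string depth $d_i=e_i-b_i+1$, so the problem is a batch of $k$ weighted level ancestor queries on $\ST(w)$; and the final pass that buckets the resulting (edge, offset) pairs and splices in explicit nodes is routine and correctly budgeted. The gap is that the middle step---answering the $k$ weighted level ancestor queries in $O(n+k)$ total time---is exactly the nontrivial content of the lemma, and you assert it rather than prove it. The mechanism you sketch (a DFS that, at each leaf, resolves that leaf's depth-sorted queries against ``a predecessor structure on the current root-to-node path,'' with costs ``amortized across shared edges'') does not survive scrutiny: merging each leaf's sorted query depths with the stack of ancestor string depths costs $\Theta(\mathrm{depth}(\lambda)+\#\text{queries at }\lambda)$ per leaf, and the sum of leaf depths is $\Theta(n^2)$ for strings like $a^n$; while an online dynamic predecessor structure with $O(1)$ worst-case or amortized time per update and query over integer keys in $[1,n]$ does not exist, so ``predecessor structure'' cannot be left as a black box here. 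No invariant or potential function is given that would make the claimed amortization go through.

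A concrete way to close the gap, staying within your framework, is to make the weighted level ancestor step genuinely offline: bucket the explicit nodes of $\ST(w)$ by string depth and the queries by target depth $d_i$, then sweep a threshold $d$ from $n{+}1$ down to $1$, maintaining a union--find structure in which every node whose string depth is at least the current threshold has been unioned into its parent. At threshold $d_i$, $\mathrm{find}(\lambda_i)$ returns the deepest ancestor $u_i$ of $\lambda_i$ with $|\pathstr(u_i)|<d_i$; the edge containing the locus goes from $u_i$ to the ancestor of $\lambda_i$ at node depth $\mathrm{depth}(u_i)+1$, retrievable in $O(1)$ with a standard (unweighted) level ancestor structure. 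Since all unions follow the fixed tree structure, the Gabow--Tarjan union--find for static trees gives $O(n+k)$ total time. Alternatively, simply cite a linear-time batched (or $O(1)$-time online) weighted ancestor result for suffix trees; either repair turns your outline into a proof.
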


For a string $x$ of length $n$, a {\em longest extension query}, given positions $1\leq i,j\leq n$ asks for the longest common prefix
between $x[i..n]$ and $x[j..n]$. It is known that the string can
be pre-processed in $O(n)$ time so that the longest extension query
can be answered in $O(1)$ time for any $i,j$ (e.g.~\cite{DBLP:conf/cpm/FischerH06}).

\subsection{Matching Statistics}
For two strings $x$ and $y$, the {\em matching statistics}
of $y$ with respect to $x$ is an array 
$\ms_{y,x}[1..|y|]$, where
\[\ms_{y,x}[i] = \max \{ l \geq 0 : \exists_{j\in\{1,\ldots,|x|\}}  \, x[j..j+l-1] = y[i..i+l-1]\}\]
for any $1 \leq i \leq |y|$.
That is, for each position $i$ of $y$, $\ms_{y,x}[i]$ is the length of the longest prefix of $y[i..|y|]$ that occurs in $x$.
The concept of matching statistics can be generalized to a set of strings.
For a set $X = \{ x_1, \ldots ,x_k \}$ of $k$ strings and a string $y$,
the $k'$-matching statistics of $y$ with respect to $X$ is an array
$\ms_{y,X}^{k'}[1..|y|]$ where
\[\ms_{y,X}^{k'}[i] = \max \{ l \geq 0 : 
|\{x \in X : \exists_{j\in\{1,\ldots,|x|\}} \,x[j..j+l-1] = y[i..i+l-1]\}|\ge k' \}
\]
That is, for each position $i$ of $y$, $\ms_{y,X}^{k'}[i]$ is the length of the longest prefix of $y[i..|y|]$ that occurs in at least $k'$ of the strings in $X$.

\subsection{Longest Common Property Preserved Substring Queries}
Let a function $P: \Sigma^* \rightarrow \{ \mathtt{true}, \mathtt{false}\}$ be called a 
{\em property function}.
In this paper, we will consider the following property functions
$\propNSq, \propSq, \propRn,\propPal, \propLyn$, which return $\mathtt{true}$ if and only if
a string is respectively a square-free, square, periodic string, palindrome, or a Lyndon word.

The following is the on-line version of the 
problem considered in~\cite{DBLP:conf/spire/AyadBGIPPR18}, 
where a solution was given for the longest common square free substring,
i.e., $P =\propNSq$.
\begin{problem}[Longest common property preserved substring query]
  \label{problem:single}
  Let $P$ be a property function.
  Consider a string $x$ which can be pre-processed.
  For a query string $y$, compute the longest string $z$ that is a substring of both $x$ and $y$, and also satisfies $P(z) = \mathtt{true}$.
\end{problem}
The following is the generalized version of the on-line setting that we consider in this paper.
\begin{problem}[Generalized longest common property preserved substring query]
  \label{problem:general}
  Let $P$ be a property function.
  Consider a set of strings $X = \{ x_1, \ldots, x_k\}$ that can be pre-processed.
  For a query string $y$ and positive integer $k' \leq k$,
  compute the longest substring $z$ of $y$ that is a substring of at least $k'$ strings in $X$, and also satisfies $P(z) = \mathtt{true}$.
\end{problem}

Below is a summary of our results. Here,
the {\em working space} of the query is the amount of memory that is required in excess to the data structure 
constructed in the pre-processing.
All memory other than the working space can be considered as read-only.

\begin{theorem}\label{theorem:maintheorem}
  For any property function $P\in\{ \propNSq, \propSq, \propRn, \propPal, \propLyn \}$,
  Problem~\ref{problem:general} can be answered in $O(|y|\log\sigma)$ time and $O(1)$ working space
  by constructing an $O(n)$ space data structure in $O(n)$ time,
  where $n = \sum_{i=1}^k|x_i|$ is the total length of the strings in $X$.
\end{theorem}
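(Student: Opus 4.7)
The plan is to split the algorithm into a \emph{property-independent} online matching-statistics engine and a \emph{property-specific} candidate tracker, both running in a single left-to-right pass over $y$ with $O(1)$ working memory. For the engine, I would during preprocessing build $\ST(X)$ in $O(n)$ time, annotate every explicit node $v$ with $|\pathstr(v)|$, an occurrence of $\pathstr(v)$ in some $x_s\in X$, and a \emph{color count} $c(v)$ equal to the number of distinct $x_s$'s that contain $\pathstr(v)$, and install on the tree a weighted-ancestor structure whose weights are the $c(\cdot)$ values (which are monotone non-increasing from root to leaves). At query time, I run the standard suffix-link-driven matching-statistics walk of $y$ in $\ST(X)$: each character is consumed in $O(\log\sigma)$ amortized time using only an $O(1)$-size ``active locus'' state, and a single weighted-ancestor query with threshold $k'$ on the active locus yields $\ms^{k'}_{y,X}[i]$, the longest length of a prefix of $y[i..|y|]$ that occurs in at least $k'$ strings of $X$.

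For each property $P\in\{\propNSq,\propSq,\propRn,\propPal,\propLyn\}$ I would then plug in an online tracker that, using $O(1)$ amortized work per consumed character of $y$, emits a family of $O(|y|)$ candidates, each a $P$-substring $y[i..j]$ represented by a constant-size record: Duval's algorithm for $\propLyn$, an on-line variant of Main's algorithm for the runs underlying $\propSq$ and $\propRn$, an on-line square-freeness test for $\propNSq$, and Manacher-style arithmetic-progression updates for $\propPal$. At the moment a candidate is finalized, the tracker compares its length to the relevant matching-statistics value (either available at once, because the tracker's finalization coincides with the starting position of the candidate, or handled symmetrically by running a second engine on $y^R$ against $\{x_s^R\}$ in parallel), and the single best valid candidate seen so far is stored in $O(1)$ memory as an occurrence $(s, b, b+\ell-1)$ in some $x_s\in X$. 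Summed up, the query processes each character of $y$ in $O(\log\sigma)$ amortized time while carrying only $O(1)$ state, matching the desired bounds.

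The hardest part will be reconciling the $O(1)$ working-space requirement with the fact that textbook algorithms for several of these properties (runs enumeration, the full Eertree, Crochemore's square-free detector) use $\Omega(|y|)$ memory, and with the fact that the length test ``$j-i+1\le\ms^{k'}_{y,X}[i]$'' may need to be evaluated at a position (either the start or the end of the candidate) other than the currently active one in the engine. The proof relies, for each property, on a combinatorial bound showing that only $O(1)$ amortized new maximal $P$-substrings appear per character, together with the observation that since we only ever need a \emph{single} longest answer, dominated candidates may be discarded as soon as they are emitted, leaving the tracker with $O(1)$ state between successive positions of $y$; the compatibility of each tracker with the matching-statistics engine is then verified property by property.
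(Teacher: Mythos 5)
There is a genuine gap, and it lies exactly where you place the ``hardest part.'' The correct candidate set is not the family of maximal $P$-substrings of $y$: since the answer must be common to $y$ and to $k'$ strings of $X$, and since $e_{i'}\ge e_i$ for $i'\ge i$, the answer is $\max_i(\text{longest }P\text{-prefix of } y[i..e_i])$ where $e_i-i+1=\ms^{k'}_{y,X}[i]$. This prefix is in general a \emph{non-maximal} $P$-substring of $y$, truncated by the matching-statistics length. For $P=\propSq$, the longest square starting at $i$ in $y$ may exceed $e_i-i+1$ while a shorter square at $i$ (possibly with an unrelated root) is the true candidate; an online runs/squares tracker emitting maximal structures of $y$ never surfaces it, and a position can begin $\Theta(|y|)$ squares (e.g.\ in $a^{|y|}$), so ``check each emitted candidate against the length test'' does not degrade gracefully. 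For $P=\propLyn$ the needed object is the minimal suffix of $y[i..e']$ for each $e'\le e_i$, which Duval's factorization of $y$ does not provide. Your proposed remedies (discarding dominated candidates, a second engine on $y^R$) address where the length test is evaluated, not the fact that the tracker produces the wrong candidates; and, as you concede, Manacher/Main/Eertree-style trackers need $\Omega(|y|)$ state, which breaks the $O(1)$ working-space and streaming requirements outright.

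The idea you are missing --- and the crux of the paper --- is that every candidate occurs in $X$, so all property-specific computation can be pushed into the preprocessing of $X$ and none of it performed on $y$. Concretely: the $O(n)$ distinct squares (resp.\ palindromes, primitively rooted squares and their periodic extensions) of $X$ have their loci made explicit in $\ST(X)$, each node stores a pointer to its nearest ancestor that is a $P$-locus, and the longest $P$-prefix of $y[i..e_i]$ is read off in $O(1)$ from the locus $v_i$; for $\propLyn$, the suffix tree supplies an occurrence $(s,b)$ of $y[i..e_i]$ in some $x_s$, and Kociumaka's $O(n)$-size constant-time minimal-suffix-query structure built on $X$ answers the minimal suffix of $x_s[b..r]$ for the $e_{i+1}-e_i$ relevant right ends, totalling $|y|$ queries. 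Your matching-statistics engine (modulo replacing the weighted-ancestor query by simply pruning edges whose color count is below $k'$ during the walk) matches the paper; the property-specific half needs to be rebuilt on $X$ rather than on $y$.
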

We further note that our algorithms
do not require random access to $y$ during the query
and thus work in the same time/space bounds
even if each symbol of $y$ is given in a streaming fashion.

The proof of the Theorem for each property function is given in the next section.
 \section{Algorithms}
In this section we present our algorithm.
The outline of our solutions is as follows.

The preprocessing consists of the following steps:
\begin{enumerate}
  \item Construct the generalized suffix tree $\ST(X)$ of $X$.\label{preprocess:st} 
  \item For each explicit node $v$ of $\ST(X)$, compute the number of strings in $X$ that contain $\pathstr(v)$ 
as a substring.\label{preprocess:css} 
  \item Process $\ST(X)$ and construct a data structure
  so that, 
  given any position on $\ST(X)$, we can efficiently find a 
  {\em candidate} for the solution.\label{preprocess:function}
\end{enumerate}
Then, queries are answered as follows:
\begin{enumerate}\setcounter{enumi}{3}
  \item For each position $i$ in $y$, compute $\ms_{y,X}^{k'}[i]$, i.e., 
  the $k'$-matching statistics of $y$ with respect to $X$,
  as a locus $v_i$ of $y[i..e_i]$ in $\ST(X)$.\label{query:kpms}
  \item For each such locus $v_i$, compute a candidate using the data 
  structure computed in Step~\ref{preprocess:function} of the pre-processing.\label{query:cand}
  \item Output the longest string computed in the previous step.\label{query:output}
\end{enumerate}

As mentioned in Section~\ref{subsec:st},
Step~\ref{preprocess:st} can be performed in $O(n)$ time and space. 
The task beyond Step~\ref{preprocess:css} is known as the color set size problem, 
and can also be executed in $O(n)$ time~\cite{DBLP:conf/cpm/Hui92}.

Using $\ST(X)$, Step~\ref{query:kpms}, i.e., each locus $v_i$ of the substring $y[i..e_i]$ 
of $y$ that corresponds to each element of the matching statistics of $y$ with respect to $X$
(i.e., $e_i-i+1 = \ms_{y,X}^{k'}[i]$)
can be computed in $O(|y|\log\sigma)$ time and $O(1)$ working space, 
with a minor modification to the algorithm for computing the matching statistics between single strings \cite[Theorem~7.8.1]{gusfield1997algorithms}.
The algorithm for a single string first traverses the suffix tree with the string $y[1..|y|]$ as long as possible to compute the locus corresponding to $\ms_{y,x}[1]$. Let this prefix be $y[1..e_1]$.
Given a locus $v_i$ of $y[i..e_i]$ for some $1 \leq i < |y|$, 
the suffix link of the closest ancestor 
of $v_i$ is used in order to first efficiently find the locus of $y[i+1..e_i]$.
Then, the suffix tree is further traversed to obtain the locus of $y[i+1..e_{i+1}]$.
The time bound for the traversal follows from a well known amortized analysis
which considers the depth of the traversed nodes, 
similar to that in the online construction of suffix trees~\cite{DBLP:journals/algorithmica/Ukkonen95}.
For computing $\ms_{y,X}^{k'}$, we can simply imagine that
subtrees below edges leading to a node $v$ of $\ST(X)$ are pruned
if $\pathstr(v)$ is not contained in at least $k'$ strings of $X$.
This can be done by aborting the traversal of $y[i..|y|]$
when we encounter such an edge, by using the information obtained
in Step~\ref{preprocess:css}.
It is easy to see that the algorithm still works in the same time bound,
since the suffix link of any remaining node still points to a node
that is not pruned (i.e., if a string is contained in at least $k'$ strings,
its suffix will also be contained in the same strings).
Thus, we can visit each locus corresponding to $\ms_{y,X}^{k'}$
in $O(|y|\log\sigma)$ total time and $O(1)$ working space.

The crux of our algorithm is therefore in the details of 
Step~\ref{preprocess:function} and Step~\ref{query:cand}:
designing what the candidates are and how to compute them given $v_i$.
Notice that the solution is the longest string $z$ which satisfies $P(z) = \mathtt{true}$ and is a substring of $y[i..e_i]$ 
for some $i = 1, \ldots, n$.

\subsection{Square-free substrings}
Ayad et al.~\cite{DBLP:conf/spire/AyadBGIPPR18,DBLP:journals/corr/abs-1810-02099}
gave a solution to the on-line longest common square-free substring query problem 
for a single string (Problem~\ref{problem:single}),
in $O(|x|)$ time and space preprocessing and $O(|y|\log\sigma)$ time and $O(1)$ space query. 
We note that their algorithm easily extends to the generalized version (Problem~\ref{problem:general}). 
The only difference lies in that $\ms_{y,X}^{k'}$ is computed
instead of $\ms_{y,x}$, which can be done in $O(|y|\log\sigma)$ time 
and $O(1)$ space, as described above.
We leave the other details to~\cite{DBLP:conf/spire/AyadBGIPPR18,DBLP:journals/corr/abs-1810-02099}.

\subsection{Squares}
\label{subsec:squares}
As mentioned in the introduction,
Ayad et al.~\cite{DBLP:conf/spire/AyadBGIPPR18,DBLP:journals/corr/abs-1810-02099} also considered 
longest common periodic substrings, but in the off-line setting.
However, their algorithm is not readily extendible to the on-line setting.
It relies on the fact that the ending position of a longest common periodic 
substring must coincide with an ending position of some run in the set of strings,
and solved the problem by computing all loci corresponding to runs in $X$.
To utilize this observation in the on-line setting,
the loci of all runs in the query string $y$
must be identified in $\ST(X)$, which seems difficult to do in time not depending on $X$.

Here, we first show how to efficiently solve the problem in the on-line setting for squares,
and then we extend that solution to obtain an efficient solution for periodic substrings.
Below is an important property of squares which we exploit.\begin{lemma}[{\cite[Theorem~1]{DBLP:journals/jct/FraenkelS98}}]\label{lemma:AtMostTwoRightmostOccurrences}
   A given position can be the right-most occurrence of at most two distinct squares.
\end{lemma}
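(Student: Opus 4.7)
Since this lemma is quoted from Fraenkel and Simpson, I sketch how one would reprove it. I argue by contradiction: suppose position $i$ is the rightmost (starting) occurrence of three distinct squares $u_1u_1$, $u_2u_2$, $u_3u_3$ with $|u_1| < |u_2| < |u_3|$. Because all three start at $i$, $u_1u_1$ is a prefix of $u_2u_2$ and $u_2u_2$ is a prefix of $u_3u_3$; equivalently, the length-$2|u_3|$ window $w$ starting at $i$ has period $|u_3|$ overall, its length-$2|u_2|$ prefix has period $|u_2|$, and its length-$2|u_1|$ prefix has period $|u_1|$.

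The first step is an easy reduction. If $2|u_2| \le |u_3|$, the period $|u_3|$ of $w$ translates the occurrence of $u_2u_2$ at position $i$ to another occurrence at $i + |u_3| > i$, contradicting its rightmost-ness; analogously, $2|u_1| \le |u_2|$ would produce a later occurrence of $u_1u_1$ via the period $|u_2|$ on the prefix of length $2|u_2|$. Hence we may assume $|u_2| < 2|u_1|$ and $|u_3| < 2|u_2|$, which means that each square non-trivially overlaps its intended shift and therefore $u_2$ has period $p_2 := |u_2| - |u_1|$ and $u_3$ has period $p_3 := |u_3| - |u_2|$.

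The second step combines these short periods with the long ones via the Periodicity Lemma (Lemma~\ref{lem:weakperiodicity}). Applied to the length-$2|u_2|$ prefix of $w$, the coexisting periods $|u_2|$ and $p_2$ satisfy $|u_2| + p_2 \le 2|u_2|$, so $\gcd(|u_2|, p_2)$ is also a period there; iterating with $|u_3|$ and $p_3$ on a longer prefix forces a common small period $d$. The contradiction then follows by shifting $u_1u_1$ by $d$ inside $w$: the shifted copy lies entirely within the window and hence is a genuine occurrence of $u_1u_1$ at a starting position strictly greater than $i$, contradicting the rightmost-occurrence assumption.

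The main obstacle I expect is the bookkeeping: at every invocation of Lemma~\ref{lem:weakperiodicity} one must verify that the two periods and the prefix length under consideration satisfy the hypothesis $p + q \le |w|$, and in the final step one must check that the shifted copy actually remains inside the length-$2|u_3|$ window, so that the new occurrence is a bona fide substring of the underlying string rather than only of an extended periodic continuation. A clean organization is to first establish the two overlap inequalities $|u_2| < 2|u_1|$ and $|u_3| < 2|u_2|$ globally, and only then invoke the Periodicity Lemma within the reduced regime, so that the final contradiction is a single short-shift argument.
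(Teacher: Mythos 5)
The paper does not prove this lemma at all---it is imported verbatim as Theorem~1 of Fraenkel and Simpson---so your sketch can only be judged on its own merits, and unfortunately it contains a genuine gap. Your Step~1 is fine, but in Step~2 you attach the period to the wrong word. From ``$u_1u_1$ is a prefix of $u_2u_2$ with $|u_2|<2|u_1|$'' what follows is that the \emph{shorter} root $u_1$ has period $p_2=|u_2|-|u_1|$ (and that $u_2=u_1\cdot u_1[1..p_2]$), \emph{not} that $u_2$ has period $p_2$. Concretely, take $u_1=\mathtt{aba}$ and $u_2=\mathtt{abaab}$: then $u_1u_1=\mathtt{abaaba}$ is a prefix of $u_2u_2=\mathtt{abaababaab}$ and $p_2=2$, but $u_2$ does not have period $2$ (its second and fourth letters differ). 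Hence $u_2u_2$ need not admit the period $p_2$, and your invocation of Lemma~\ref{lem:weakperiodicity} on the length-$2|u_2|$ prefix with periods $|u_2|$ and $p_2$ is unsound; everything downstream (the common period $d$ and the final shift of $u_1u_1$ by $d$) collapses with it.

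There is also a structural warning sign you should have caught: as written, your argument never actually uses the third square. If the first periodicity-lemma application were valid, then $\gcd(|u_2|,p_2)\le p_2$ would be a period of $u_2u_2$ and shifting $u_1u_1$ by it would already yield a later occurrence of $u_1u_1$---i.e., you would have ``proved'' that at most \emph{one} distinct square can have its rightmost occurrence at any position. That is false: in $\mathtt{abaababaab}$ both squares $\mathtt{abaaba}$ and $\mathtt{abaababaab}$ occur only at position~$1$. The correct proof genuinely needs all three squares: one first strengthens your reduction to $|u_3|<2|u_1|$ (otherwise $u_1u_1$ is a prefix of $u_3$ and recurs at $i+|u_3|$), which gives $p_2+p_3=|u_3|-|u_1|<|u_1|$ so that Lemma~\ref{lem:weakperiodicity} can be applied to $u_1$ with the two \emph{inherited} periods $p_2$ and $p_3$; even then, turning the resulting small period of $u_1$ into a later occurrence of one of the squares requires a primitivity/synchronization argument (as in the Crochemore--Rytter three-squares lemma or Ilie's simplified proof), not just another shift. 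Your sketch is missing both the correct direction of the period transfer and this final synchronization step.
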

It follows from the above lemma that the number of distinct squares in a given string is 
linear in its length~\cite{DBLP:journals/jct/FraenkelS98}.
Also, it gives us the following Corollary.
\begin{corollary}\label{corollary:squaresOnEdge}
  There can only be at most two implicit nodes on an edge of a suffix tree which corresponds to a square.
  \qed
\end{corollary}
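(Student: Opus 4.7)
The plan is to derive the corollary from Lemma~\ref{lemma:AtMostTwoRightmostOccurrences} (Fraenkel--Simpson) via a standard property of suffix-tree edges. I would proceed by contradiction: suppose there is an edge of $\ST(X)$ carrying three distinct implicit nodes $v_1, v_2, v_3$ whose path-strings $s_1 = \pathstr(v_1)$, $s_2 = \pathstr(v_2)$, $s_3 = \pathstr(v_3)$ are all squares. Without loss of generality, $|s_1| < |s_2| < |s_3|$, so $s_1$ is a proper prefix of $s_2$, which is a proper prefix of $s_3$; in particular the three squares are pairwise distinct.

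The key observation is that every (implicit or explicit) node on a single edge of a suffix tree has exactly the same set of descendant leaves, and hence the three strings $s_1, s_2, s_3$ have identical sets of starting positions in the underlying text (viewing $\ST(X)$ as the suffix tree of the concatenation $x_1\$\cdots x_k\$$, with the caveat that the $\$$'s prevent any square from straddling two input strings, so squares in the concatenation are squares in some single $x_i$). Let $p^\ast$ denote the rightmost position in this common set of starting positions. Then $p^\ast$ is simultaneously the starting position of the rightmost occurrence of each of $s_1, s_2, s_3$, so a single position is the rightmost occurrence of three distinct squares. This contradicts Lemma~\ref{lemma:AtMostTwoRightmostOccurrences}.

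Essentially nothing in this argument is a real obstacle: the only point requiring a moment of care is making sure the Fraenkel--Simpson bound applies to the generalized suffix tree, which is handled by the $\$$-separator remark above (and by noting that Lemma~\ref{lemma:AtMostTwoRightmostOccurrences} is invariant under whether one speaks of the starting or the ending position of the rightmost occurrence, since the three squares in question have a common starting position for their rightmost occurrence). Hence the corollary follows immediately once the ``same set of starting positions along an edge'' property is invoked.
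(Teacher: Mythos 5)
Your proof is correct and follows essentially the same route as the paper's: all loci on a single edge share the same set of occurrences (descendant leaves), hence the same rightmost occurrence, so three square loci on one edge would violate the Fraenkel--Simpson bound of Lemma~\ref{lemma:AtMostTwoRightmostOccurrences}. The extra remarks about the $\$$-separators and starting versus ending positions are harmless elaborations of the same argument.
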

\begin{proof}
  The right-most occurrence of a square is the maximum position corresponding to leaves in the subtree
  rooted at the square. Since implicit nodes that correspond to squares on the same edge share the right-most
  occurrence, a third implicit node would contradict Lemma~\ref{lemma:AtMostTwoRightmostOccurrences}.
\end{proof}

For squares, we first compute the locus in $\ST(X)$ of all distinct squares that are substrings of 
strings in $X$,
and we make them explicit nodes in $\ST(X)$. Note that these additional nodes will not have suffix links,
but since there are only a constant number of them on each edge of the original suffix tree,
it will only add a constant factor to the amortized analysis when computing Step~\ref{query:kpms}.
The loci of all squares can be computed in $O(n)$ total time using 
the algorithm of~\cite{DBLP:conf/cpm/BannaiIK17}.
We further add to each explicit node in $\ST(X)$ (including the ones we introduced above)
a pointer to the nearest ancestor that is the locus of a square (see Fig.~\ref{figure:st-sq} for an example of the pointers).
Notice that a node that is the locus of a square is explicit and points to itself.
This can be also done in linear time by a simple depth-first traversal on $\ST(X)$.
\begin{figure}[tbp]
  \begin{center}
    \includegraphics[width=0.9\textwidth]{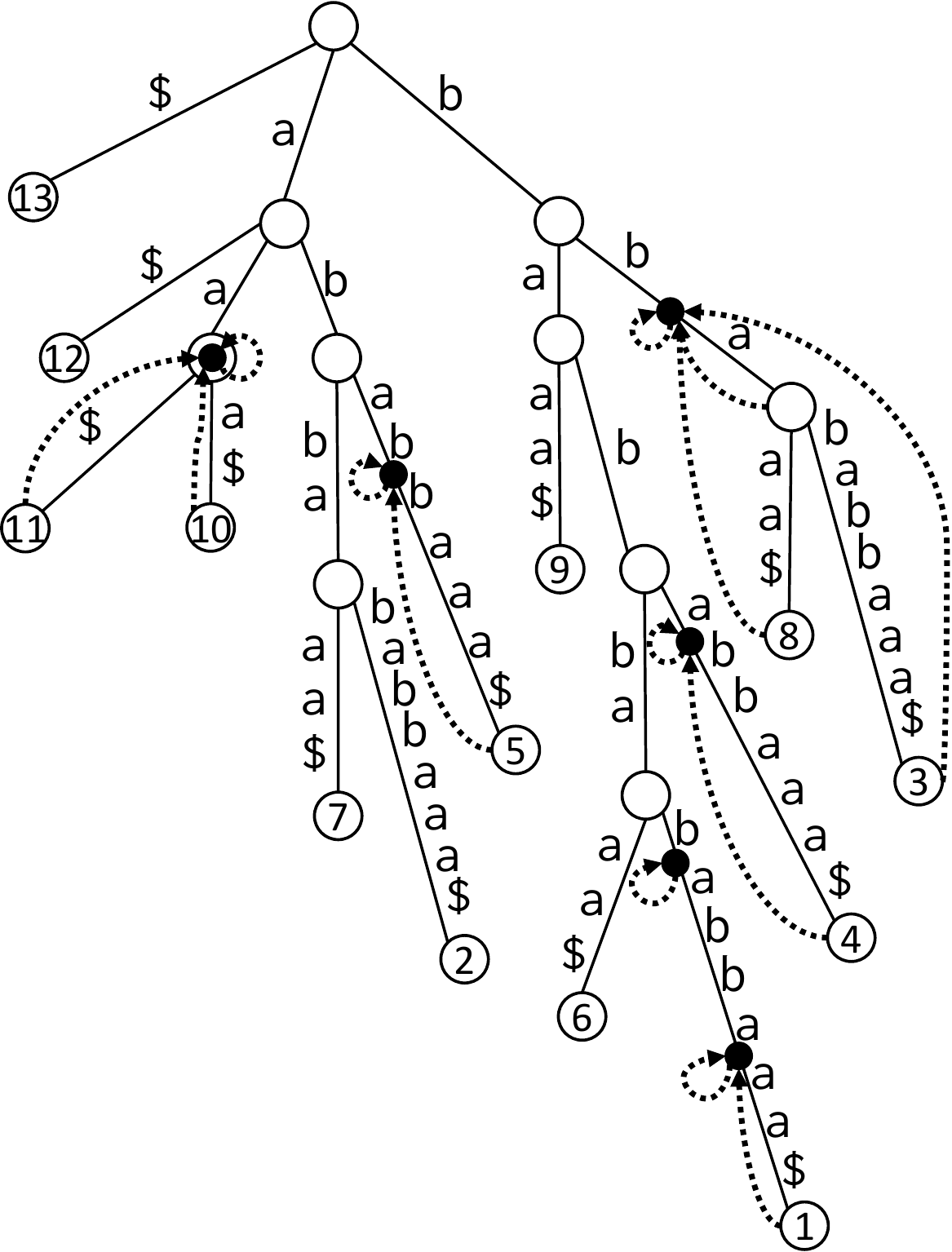}
  \end{center}
  \caption{Example of $\ST(X)$ for a single string $X = \{\mathtt{babbababbaaa\$}\}$
  augmented with nodes and pointers for $P = \propSq$.
  The solid dark circles show the loci corresponding to squares which are made explicit.
  The dotted arrows show the pointers which point to the nearest ancestor which is a square. 
  Pointers which point to the root (i.e., there is no non-empty prefix that is a square) are omitted.
  }
  \label{figure:st-sq}
\end{figure}

The candidate, which we will compute in Step~\ref{query:cand} for each locus $v_i$,
is the longest square that is a prefix of $y[i..e_i]$. 
It can be determined for each locus $v_i$ by using the pointers.
When $v_i$ is an explicit node, the pointer of $v_i$ is the answer.
When $v_i$ is an implicit node, the pointer of the parent of $v_i$ is the answer.
The longest such square for all loci is the answer to the query.
This is because the longest common square must be a prefix of the string corresponding to the $k'$-matching statistics of some position.
Thus, we have a solution as claimed in Theorem~\ref{theorem:maintheorem} for $P = \propSq$.

\subsection{Periodic Substrings}
Next, we extend the solution for squares to periodic substrings as follows.

We first explain the data structure,
which is again an augmented $\ST(X)$.
For each primitively rooted square substring $w$, we make the locus of $w$ in $\ST(X)$ an explicit node.
(The non-primitively rooted squares are redundant since they will lead to the 
same periodic substrings.)
Furthermore, we also make explicit the deepest locus $v$ in $\ST(X)$ obtained
by periodically extending a primitively rooted square $w$, i.e., 
$w$ is a prefix of $\pathstr(v)$ and the smallest period of $\pathstr(v)$ is $\frac12|w|$.
We add to each explicit node, a pointer to the nearest explicit ancestor (including itself)
that is a locus of some square or its extension.
If an explicit node is an extension of a square, it will also hold information to identify 
which square it is an extension of (e.g., its period).
Note that the pointer of an explicit node that lies between a square 
and its extension will point to itself.
Fig.~\ref{fig:stsqexample} shows an example of $\ST(X)$
for a single string $X = \{ \mathtt{aababababbababab\$}\}$,
where loci corresponding to squares and their rightmost-maximal extensions are depicted.

\begin{figure}[tbp]
  \begin{center}
    \includegraphics[width=0.9\textwidth]{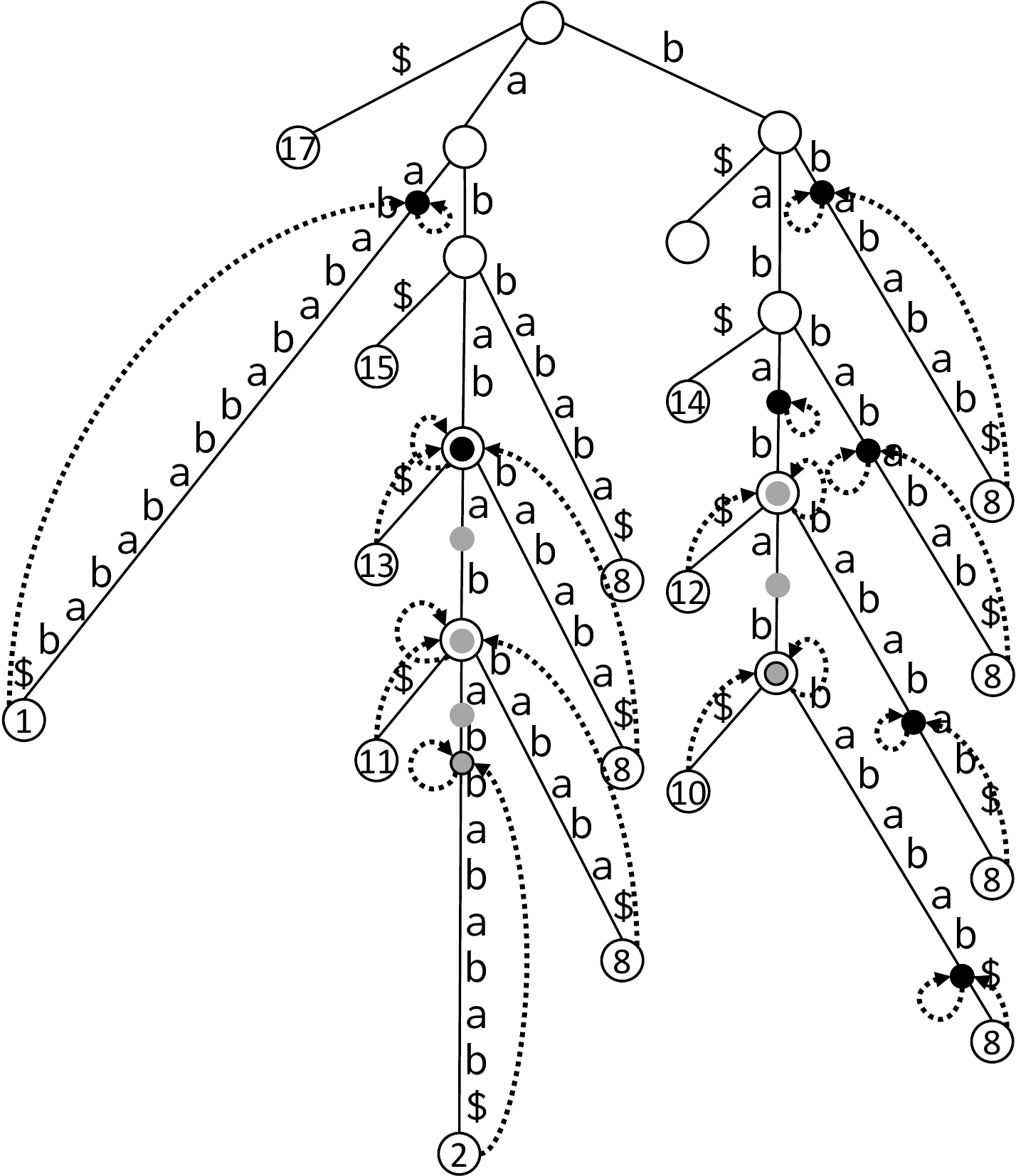}
  \end{center}
  \caption{Example of $\ST(X)$ for a single string $X = \{ \mathtt{aababababbababab\$} \}$
  augmented with nodes and pointers for $P=\propRn$.
  The solid dark circles show the loci corresponding to squares, which are made explicit,
  and the grey circles show the loci corresponding to their extensions, 
  where the ones with a solid border are made explicit.
  The dotted arrows show the pointers which point to the nearest ancestor (longest prefix)
  which is periodic. The pointers which point to the root are omitted.
  The total number of solid dark circles, as well as grey circles with a solid border is $O(n)$.
  The total number of implicit grey circles without the solid border is not necessarily $O(n)$, but
  since they occur consecutively from a solid node, 
  they can be represented in $O(n)$ space.
  }
  \label{fig:stsqexample}
\end{figure}

We first show how the above augmentation of $\ST(X)$
can be executed in $O(n)$ time.
We first make explicit all loci of squares as in Section~\ref{subsec:squares}, which can be done in $O(n)$ time.
Then, we start from the locus of each primitively rooted square
and extend the periodicity of the square towards the leaves 
of $\ST(X)$ as deep as possible.
For each explicit node we encounter during this
extension, the pointer will point to itself, and the node will also
store the period of the underlying square.
The total cost of this extension can be bounded as follows.
Due to the periodicity lemma (Lemma~\ref{lem:weakperiodicity}), 
any locus of a primitively rooted square or its extension cannot 
be an extension of a shorter square; if it were,
a (proper) divisor of the period of the longer square would 
also be a period of the square,
and would contradict that the longer square is primitively rooted.
Thus, we can naturally discard all non-primitive squares
by doing the extensions in increasing order of the length 
of squares (which can be obtained in linear time by radix sort).
During the extension, if the square is non-primitive, then 
the pointers of nodes will already have been determined by a square
of a shorter period, and this situation can be detected.
From the above arguments, any edge of $\ST(X)$ is traversed at most once, 
and thus the extension can be done by traversing a total of $O(n)$ 
nodes and edges.
Because we know the period of the square we wish to extend,
and for any locus $v$ in $\ST(X)$, an occurrence $b$ in some $x_s\in X$,
we can compute the extension in $O(1)$ time per edge of $\ST(X)$
by using longest common extension queries.
Therefore, the total time for building the augmented $\ST(X)$ can be done in $O(n)$ time.

Queries can be answered in the same way as for squares,
except for a small modification.
For any locus $v_i$, if $v_i$ is an explicit node, then the pointer
of $v_i$ gives the answer.
When $v_i$ is an implicit node, we check $v_i$'s parent and its immediate descendant. 
If these nodes are both extensions originating from the same square (i.e., their labels have the same period), 
then, the answer is $v_i$ itself, since it is also an extension of the square. Otherwise, the pointer of the parent node provides the answer.

Thus, we have a solution as claimed in Theorem~\ref{theorem:maintheorem}
for $P = \propRn$.

\subsection{Palindromes}
It is well known that the number of non-empty distinct palindromes in a string of length $n$ is at most $n$,
since only the longest palindrome starting at some position can be the right-most occurrence of that palindrome
in the string.
\begin{lemma}[{\cite[Proposition~1]{Droubay:2001ct}}]
  A given position can be the right-most occurrence of at most one distinct palindrome.
\end{lemma}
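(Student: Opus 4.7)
The plan is to argue by contradiction, exploiting the fact that a shorter palindrome that shares a starting position with a longer palindrome must also appear shifted to the right, inside the longer one.

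First, I would fix an arbitrary position $i$ in the underlying string $w$ and interpret ``right-most occurrence of a palindrome $p$'' as the largest starting position of an occurrence of $p$ in $w$; this matches the wording preceding the proposition (``the longest palindrome starting at some position can be the right-most occurrence''). Suppose toward contradiction that two distinct non-empty palindromes $p$ and $q$ both have position $i$ as their right-most starting position, and assume without loss of generality that $|p| < |q|$.

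Since $p$ and $q$ start at the same position $i$, the shorter string $p$ must be a prefix of $q$. Because $q$ is a palindrome, $q = q^R$, and therefore $p^R$ is a suffix of $q$. Using that $p$ itself is a palindrome, $p^R = p$, so $p$ occurs as a suffix of $q$. Translating to positions in $w$, this suffix occurrence of $p$ starts at position $i + |q| - |p|$, which is strictly greater than $i$ because $|q| > |p|$.

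This produces an occurrence of $p$ starting strictly to the right of $i$, contradicting the assumption that $i$ was the right-most starting position of $p$. Hence $p = q$, proving the proposition. There is no real obstacle here; the only subtlety is to make clear which notion of ``position'' (starting vs.\ ending) is being used, and the argument hinges on the single symmetric fact that a palindrome equals its reverse, so a prefix is automatically mirrored as a suffix.
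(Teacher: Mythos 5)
Your proof is correct and matches the argument the paper only sketches (it cites Droubay et al.\ and notes that ``only the longest palindrome starting at some position can be the right-most occurrence''): a shorter palindrome sharing the starting position is a prefix of the longer one and hence, by the mirror symmetry of the longer palindrome, recurs as its suffix, strictly to the right. Your reading of ``right-most occurrence'' as largest starting position is also the one the paper uses (cf.\ its proof of Corollary~\ref{corollary:squaresOnEdge}), so the argument is complete as written.
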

All distinct palindromes in a string can be computed in linear time~\cite{GROULT2010908}.
The locus of each palindrome in $\ST(X)$ can be computed in $O(n)$ time by Lemma~\ref{lemma:offlineSubwordLoci}.
The rest is the same as for squares;
we make all loci corresponding to a palindrome an explicit node,
and do a linear time depth-first traversal on $\ST(X)$ to make pointers
to the nearest ancestor that is a palindrome.
As in the case of squares, we can bound the number of palindromes which will be on an edge of 
the original suffix tree.
\begin{corollary}\label{corollary:palindromesOnEdge}
  There can only be at most one implicit node on an edge of a suffix tree that corresponds to a palindrome.
\end{corollary}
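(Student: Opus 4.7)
The plan is to mimic the proof of Corollary~\ref{corollary:squaresOnEdge} almost verbatim, substituting the palindromic version of the underlying combinatorial bound. The key geometric observation is that every implicit node on a given edge of $\ST(X)$ has exactly the same set of descendant leaves (all leaves in the subtree rooted at the lower endpoint of the edge); consequently, the path-strings of all such implicit nodes share the same rightmost occurrence in $X$, identified by the maximum starting position among those leaves.

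With that observation in hand, the argument reduces to the combinatorial fact recorded just above the corollary: a given position can be the rightmost occurrence of at most one distinct palindrome. Applied to the common rightmost-occurrence starting position of the edge, this immediately rules out two distinct implicit nodes on the edge both being loci of palindromes, since any two such loci would correspond to distinct palindromes (they have different lengths along the edge) sharing the same rightmost occurrence.

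The main delicate point — analogous to the subtlety in the squares corollary — is ensuring that ``rightmost occurrence'' is interpreted consistently between the edge-sharing observation (where it is really the maximum starting position among descendant leaves) and the cited lemma. This is where the prefix–suffix structure of palindromes does the real work: if two distinct palindromes $p_1$ and $p_2$ with $|p_1|<|p_2|$ both occurred starting at position $i$, then $p_1$ would be a prefix of $p_2$, hence by palindromicity also a suffix of $p_2$, and therefore $p_1$ would occur starting at $i+|p_2|-|p_1|>i$, contradicting the choice of $i$ as the rightmost starting position of $p_1$. I would expect this half-line verification to be the only non-routine step; everything else is a direct transcription of the proof used for squares, with the bound ``two'' sharpened to ``one.''
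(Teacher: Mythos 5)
Your proof is correct and follows essentially the same route as the paper, which simply declares the corollary ``analogous to the proof of Corollary~\ref{corollary:squaresOnEdge}'': implicit nodes on one edge share their set of descendant leaves, hence their rightmost occurrence, and the palindromic analogue of the Fraenkel--Simpson bound then allows at most one palindromic locus per edge. Your extra half-line verifying that the cited one-palindrome-per-position lemma really does hold for rightmost \emph{starting} positions (via the prefix/suffix symmetry of palindromes) is a welcome addition the paper leaves implicit.
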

\begin{proof}
 Analogous to the proof of Corollary~\ref{corollary:squaresOnEdge}. 
 \qed
\end{proof}

The rest of the algorithm and analysis is the same,
thus we obtain a solution for $P=\propPal$ of Theorem~\ref{theorem:maintheorem}.

\subsection{Lyndon Words}
For Lyndon words,
we use the following result by Kociumaka~\cite{DBLP:conf/cpm/Kociumaka16}.
A {\em minimal suffix query} (MSQ) on a string $T$,
given indices $\ell,r$ such that $1\le \ell \le r \le |T|$,
determines the lexicographically smallest suffix of $T[\ell..r]$.
\begin{lemma}[{\cite[Theorem 17]{DBLP:conf/cpm/Kociumaka16}}]\label{lemma:minSuffixQuery}
  For any string $T$ of length $n$, there exists a data structure of size $O(n)$ which can be constructed in $O(n)$ time, which can answer minimal suffix queries on $T$ in constant time.
\end{lemma}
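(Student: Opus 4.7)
The plan is to reduce minimal suffix queries to a structural question about Lyndon decompositions. For every nonempty string $s$, if $s = \ell_1 \ell_2 \cdots \ell_m$ is its unique Lyndon factorization with $\ell_1 \succeq \cdots \succeq \ell_m$, then the lexicographically smallest suffix of $s$ is exactly $\ell_m$, the last factor (a standard consequence of the Lyndon property: any suffix starting at a factor boundary $\ell_k\ell_{k+1}\cdots\ell_m$ is $\succ \ell_m$ because $\ell_k\succeq\ell_m$, and any suffix that starts strictly inside some $\ell_k$ is even larger because $\ell_k$ is Lyndon). Hence it suffices to design an $O(n)$-space data structure that, given $[\ell,r]$, returns the last Lyndon factor of $T[\ell..r]$ represented by its starting position and length in $T$.

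I would first build the standard toolkit on $T$: a suffix array (or suffix tree) augmented with LCA information, giving $O(1)$ longest common extension and thus $O(1)$ lexicographic comparison of any two substrings, all in $O(n)$ preprocessing time. Then, running Duval's algorithm once, I would tabulate for every position $i$ the Lyndon word $\Lambda(i)$ starting at $i$ in the Lyndon decomposition of the suffix $T[i..n]$. The map $i \mapsto \Lambda(i)$ enjoys strong structural properties — consecutive values are related by "Lyndon mergers" in which several short roots are absorbed into a longer covering one — so it can be encoded in $O(n)$ space. To answer a query I would combine a micro/macro decomposition of $T$ into blocks of length $\Theta(\log n)$, with precomputed tables for intra-block queries, together with a sparse-table-like structure over Lyndon roots at block boundaries, so that every query is resolved by $O(1)$ table accesses plus a constant number of LCE and comparison queries.

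The principal obstacle — and what separates the optimal bound from the straightforward $O(\log n)$ query time — is proving that the collection of distinct "last Lyndon factor" candidates arising over all query substrings has total size $O(n)$, and that these candidates admit an access structure in which no binary search is needed. The bound should rest on a charging argument: whenever a new candidate appears as one endpoint sweeps through $T$, it must absorb at least one strictly smaller Lyndon root, and each root can be absorbed only a bounded number of times before it is itself subsumed; the total charge is therefore $O(n)$. Once that combinatorial lemma is in place, the promised $O(n)/O(1)$ trade-off follows from standard word-RAM techniques for constant-time lookup into precomputed tables, and boundary effects at $\ell$ and $r$ are handled by a constant number of auxiliary LCE queries to truncate or extend the candidate Lyndon root to fit the queried interval.
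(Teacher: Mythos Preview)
The paper does not prove this lemma at all: it is quoted verbatim as Theorem~17 of Kociumaka (CPM 2016) and used as a black box, with no argument supplied. So there is no ``paper's own proof'' against which to compare your attempt; the authors simply invoke the cited result.

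As to the substance of your sketch: the opening reduction (the minimal suffix of $s$ equals the last factor of its Lyndon factorization) is correct and is indeed the conceptual entry point of Kociumaka's construction. Your plan to tabulate, for each $i$, the first Lyndon factor $\Lambda(i)$ of $T[i..n]$ and then stitch these together via LCE queries and a micro/macro block scheme is in the right spirit, but as written it is a programme rather than a proof. The step you yourself flag as the principal obstacle --- bounding the number of distinct ``last Lyndon factor'' candidates by $O(n)$ and arranging constant-time access without a binary search --- is exactly where the technical work lies, and your charging argument is only asserted, not carried out. In Kociumaka's paper this is handled through a careful analysis of \emph{significant suffixes}: for each right endpoint $r$ there are only $O(\log r)$ candidate minimal suffixes as $\ell$ varies, and the data structure exploits this together with word-RAM packing to avoid the logarithmic search. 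Your absorption/charging idea is not obviously equivalent to that bound, and the claim that ``each root can be absorbed only a bounded number of times'' would need a precise statement and proof before the $O(n)$ total follows. In short, the direction is reasonable, but the hard lemma is still missing.
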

Using Lemma~\ref{lemma:minSuffixQuery}, we can find the longest Lyndon word ending at a given position.\begin{lemma}
  For any string $w$, the lexicographically smallest
  suffix is the longest Lyndon word that is a suffix of $w$.
\end{lemma}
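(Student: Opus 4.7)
The plan is to prove the two directions as two short arguments that both exploit the fact that every non-empty suffix of a suffix of $w$ is itself a non-empty suffix of $w$. Let $s$ denote the lexicographically smallest suffix of $w$.

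First, I would verify that $s$ is a Lyndon word. Fix any non-empty proper suffix $v$ of $s$; then $v$ is also a non-empty suffix of $w$, so by the minimality of $s$ we have $s \preceq v$. Because $|v| < |s|$, the string $s$ cannot be a proper prefix of $v$, so the definition of $\prec$ given in the preliminaries forces $s$ to differ from $v$ at some comparison position, yielding the strict inequality $s \prec v$. Since this holds for every non-empty proper suffix $v$, the string $s$ satisfies the definition of a Lyndon word.

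Next, I would show that no strictly longer suffix of $w$ is a Lyndon word. Suppose $\ell$ is a Lyndon word that is a suffix of $w$ with $|\ell| > |s|$. Since both $\ell$ and $s$ are suffixes of $w$ and $|\ell| > |s|$, the string $s$ is a non-empty proper suffix of $\ell$. The Lyndon property of $\ell$ then gives $\ell \prec s$, contradicting the fact that $s$ is the lexicographically smallest suffix of $w$. Therefore $|\ell| \leq |s|$, and combining with the previous paragraph establishes that $s$ is the longest Lyndon suffix of $w$.

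The argument is essentially a bookkeeping exercise, so there is no real obstacle. The only subtle point to be careful about is the distinction between $\preceq$ and $\prec$: one has to explicitly rule out the prefix case using the length comparison, which is what turns the non-strict inequality obtained from minimality of $s$ into the strict inequality required by the Lyndon definition.
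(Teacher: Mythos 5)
Your proof is correct and is exactly the argument the paper has in mind; the paper simply states both directions as "clear from the definition," while you fill in the details (including the careful handling of the prefix case of $\prec$ via the length comparison). No issues.
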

\begin{proof}
  From the definition of Lyndon words, 
  it is clear that the minimal suffix must be a Lyndon word,
  and that a longer suffix cannot be a Lyndon word. \qed
\end{proof}

For Step~\ref{preprocess:function}, we process all strings in $X$ so that MSQ can be answered in constant time.

For Step~\ref{query:cand}, the situation is a bit different from
squares, periodic strings, and palindromes, in that we 
can have multiple candidates for each $v_i$ rather than just one.
For convenience, let $e_0 = 0$.
For each $0 \leq i < n$, suppose we have obtained the locus $v_i$ of $y[i..e_i]$,
and the next locus $v_{i+1}$ of $y[i+1..e_{i+1}]$.
Notice that $e_i \leq e_{i+1}$ and for all positions $e'$ such that $e_i < e' \leq e_{i+1}$,
$y[i+1..e']$ is the longest substring of $y$ that ends at $e'$ and is a substring of $y[j..e_j]$ for some $j=1,\ldots,n$.
As mentioned in Section~\ref{subsec:st},
we can obtain an occurrence $(s,b)$ of $y[i+1..e_{i+1}]$ such that $y[i+1..e_{i+1}] = x_s[b..b+\ms_{y,X}^{k'}[i]-1]$.
Then, we use MSQ on substrings $x_s[b..r]$ such that $b + e_i - i \leq r \leq b + e_{i+1} - (i+1)$,
which is equivalent to using MSQ on substrings $y[i+1..e']$ for all $e_i < e' \leq e_{i+1}$.
The longest suffix Lyndon word obtained in all the queries is therefore
the longest Lyndon word that is a substring of $y[j..e_j]$ for some $j = 1,\ldots, n$.
Since we perform $e_{i+1} - e_i$ MSQs for each position $i+1$ of $y$, the total number of MSQs is $|y|$,
which takes $O(|y|)$ time.
Thus, we have a solution as claimed in Theorem~\ref{theorem:maintheorem} for $P = \propLyn$.

\subsection{Solutions in the Off-line Setting}
We note that a solution for the on-line setting gives a
solution for the off-line setting, since,
for any $X = \{ x_1, \ldots, x_k \}$,
we can consider the string $y = x_1\#\cdots\#x_k$, 
where $\#$ is again a symbol that doesn't appear elsewhere.
Since $|y| = O(n)$, the preprocessing time is $O(n)$,
and the query time is $O(n\log\sigma)$.

Furthermore, we can remove the $\log\sigma$ factor
by processing $\ST(X)$ for {\em level ancestor queries}.
Level ancestor queries, given a node $v$ of tree $T$ and integer $d$,
answer the ancestor of $v$ at (node) depth $d$.
It is known that level ancestor queries can be answered in constant time, after
linear time preporocessing of $T$~(e.g.~\cite{DBLP:journals/tcs/BenderF04}).
The $\log\sigma$ factor came from determining which child of 
a branching node we needed to follow when traversing $\ST(X)$
with some suffix of $y$.
Since, in this case, we can identify the leaf in $\ST(X)$ 
that corresponds to the current suffix of $y$ 
(i.e. some suffix of $x_i$ in $X$) that is being traversed,
we can use level ancestor query to determine, in constant time,
the child of the branching node that is an ancestor of that leaf,
thus getting rid of the $\log\sigma$ factor. 
\section{Conclusion}
We considered the generalized on-line variant of the longest common property preserved substring 
problem proposed by Ayad et al.~\cite{DBLP:conf/spire/AyadBGIPPR18,DBLP:journals/corr/abs-1107-2422},
and 1) unified the two problem settings, and 
2) proposed algorithms for several properties, namely,
squares, periodic substrings, palindromes, and Lyndon words.
For all these properties, we can answer queries in
$O(|y|\log\sigma)$ time and $O(1)$ working space,
with $O(n)$ time and space preprocessing.

 \section*{Acknowledgments}
This work was supported by JSPS KAKENHI Grant Numbers JP18K18002 (YN), JP17H01697 (SI), JP16H02783 (HB), and JP18H04098 (MT).
Tomasz Kociumaka was supported by ISF grants no. 824/17 and 1278/16 and by an ERC grant MPM under the EU's Horizon 2020 Research and Innovation Programme (grant no. 683064). \bibliographystyle{splncs04}
\bibliography{refs}
\end{document}